\documentclass[conference]{IEEEtran}
\IEEEoverridecommandlockouts 
\ifCLASSINFOpdf
\usepackage[pdftex]{graphicx}
\graphicspath{{./figure/ }}
\DeclareGraphicsExtensions{.pdf,.jpeg,.png}
\else
\usepackage[dvips]{graphicx}
\DeclareGraphicsExtensions{.eps}
\fi

\usepackage{tikz}
\usepackage{pgfplots}
\pgfplotsset{compat=1.15}
\usetikzlibrary{calc,shapes.geometric,shapes.misc,arrows,matrix}
\usepackage[absolute,overlay]{textpos}

\usepackage{subfigure}
\graphicspath{{figure/}}

\hyphenation{op-tical net-works semi-conduc-tor}

\usepackage[utf8]{inputenc}
\usepackage[nolist]{acronym}
\usepackage[cmex10]{amsmath}
\usepackage{mdwmath}
\usepackage{mdwtab}
\usepackage{subfigure}
\graphicspath{{./figure/}}
\usepackage{afterpage}
\usepackage{multirow}
\usepackage{amsthm}
\usepackage{amssymb}

\newtheorem{proposition}{Proposition}

\usepackage{siunitx}
\DeclareSIUnit{\dBi}{dBi}
\DeclareSIUnit{\dBm}{dBm}
\DeclareSIUnit{\dBW}{dBW}
\usepackage{color}
\usepackage{dashrule}
\usepackage{algorithm} 
\usepackage{algpseudocode}
\usepackage{cite}
\usepackage{float}
\floatstyle{plaintop}
\restylefloat{table}
\usepackage{balance}

\usepackage{amssymb}

\newcommand{\vek}[1]{\ensuremath{\mathbf{#1}}}          
\newcommand{\vekt}[1]{\ensuremath{\tilde{\vek{#1}}}}        



\newcommand{\Exvl}[1]{\ensuremath{\mathrm{E}\left\{#1\right\}}}
\newcommand{\var}[1]{\ensuremath{\mathrm{Var}\left\{#1\right\}}}
\newcommand{\nvar}{\ensuremath{\sigma_\mathsf{n}^2}}
\newcommand{\naltvar}{\bar{\sigma}_{\mathsf{n}}^2}

\newcommand{\NS}{\ensuremath{N_{\text{S}}}}          

\newcommand{\Nt}{\ensuremath{N_{\text{t}}}}
\newcommand{\Nr}{\ensuremath{N_{\text{r}}}}
\newcommand{\NTx}{\ensuremath{N_{\text{Tx}}}}

\newcommand{\GTx}{\ensuremath{\zeta_{\text{Tx,dB}}}}
\newcommand{\GRx}{\ensuremath{\zeta_{\text{Rx,dB}}}}

\newcommand{\DA}{\ensuremath{D_{\text{A}}}}
\newcommand{\DS}{\ensuremath{D_{\text{S}}}}

\newcommand{\Ptx}{\ensuremath{P_{\text{Tx}}}}

\newcommand{\aoa}{\ensuremath{\theta}}
\newcommand{\aoal}{\ensuremath{\theta_\ell}}

\newcommand{\aodl}{\ensuremath{\Theta_\ell}}

\newcommand{\RE}{\ensuremath{r_{\text{E}}}}

\newcommand{\C}{\ensuremath{\mathbb{C}}}

\newcommand{\fc}{\ensuremath{f_{\text{c}}}}

\newcommand{\veka}{\ensuremath{\vek{a}}}
\newcommand{\vekb}{\ensuremath{\vek{b}}}
\newcommand{\vekx}{\ensuremath{\vek{x}}}
\newcommand{\veks}{\ensuremath{\vek{s}}}
\newcommand{\veksest}{\ensuremath{\vekt{s}}}

\newcommand{\veky}{\ensuremath{\vek{y}}}

\newcommand{\vekn}{\ensuremath{\vek{n}}}

\newcommand{\EQ}{\ensuremath{\vek{W}}}
\newcommand{\eq}{\ensuremath{\vek{w}}}
\newcommand{\eqsat}{\ensuremath{\vek{w}_{\ell}}}
\newcommand{\PC}{\ensuremath{\vek{G}}}
\newcommand{\pc}{\ensuremath{\vek{g}}}

\newcommand{\vekH}{\ensuremath{\vek{H}}}

\newcommand{\vekHsat}{\ensuremath{\vek{H}_{\ell}}}

\newcommand{\chtx}[1]{\ensuremath{\vekH_{#1}}}

\newcommand{\chest}{\ensuremath{\vekt{H}}}

\newcommand{\vekI}{\ensuremath{\vek{I}}}

\newcommand{\vekS}{\ensuremath{\vek{S}}}

\newcommand{\vekr}{\ensuremath{\vek{r}}}
\newcommand{\vekd}{\ensuremath{\vek{d}}}
\newcommand{\vekP}{\ensuremath{\vek{P}}}

\newcommand{\vekU}{\ensuremath{\vek{U}}}
\newcommand{\vekV}{\ensuremath{\vek{V}}}
\newcommand{\vekSig}{\ensuremath{\boldsymbol{\Sigma}}}
\newcommand{\vekA}{\ensuremath{\vek{A}}}
\newcommand{\vekB}{\ensuremath{\vek{B}}}

\newcommand{\dB}{\ensuremath{\text{dB}}}

\newcommand{\diag}[1]{\ensuremath{\operatorname{diag}\left(#1\right)}}
\newcommand{\blkdiag}[1]{\ensuremath{\operatorname{blkdiag}\left(#1\right)}}
\newcommand{\tr}[1]{\ensuremath{\operatorname{tr}\left\{#1\right\}}}


\begin{acronym}
	\acro{rx}[GS]{ground station}
	\acro{dl}[DL]{downlink}
	\acro{mimo}[MIMO]{multiple-input-multiple-output}
	\acro{ntn}[NTN]{non-terrestrial network}
	\acro{geo}[GEO]{geostationary orbit}
	\acro{meo}[MEO]{medium Earth orbit}
	\acro{leo}[LEO]{low Earth orbit}
	\acro{qos}[QoS]{quality of service}
	\acro{dip}[DiP]{distributed precoding}
	\acro{pspc}[PAPPC]{per-acces point power constraint}
	\acro{kkt}[KKT]{Karush-Kuhn-Tucker}
	\acro{svd}[SVD]{singular value decomposition}
	\acro{mse}[MSE]{mean-squared error}
	\acro{mmse}[MMSE]{minimum mean-squared error}
	\acro{zf}[ZF]{zero-forcing}
	\acro{mrt}[MRT]{maximum ratio transmission}
	\acro{mrc}[MRC]{maximum ratio combining}
	\acro{snr}[SNR]{signal-to-noise ratio}
	\acro{sinr}[SINR]{signal-to-interference-and-noise ratio}
	\acro{csi}[CSI]{channel state information}
	\acro{csit}[CSIT]{\ac{csi} at the transmitter}
	\acro{csir}[CSIR]{\ac{csi} at the receiver}
	\acro{los}[LOS]{line-of-sight}
	\acro{aod}[AoD]{angle of departure}
	\acro{aoa}[AoA]{angle of arrival}
	\acro{ula}[ULA]{uniform linear array}
	\acro{ecef}[ECEF]{Earth-centered, Earth-fixed }
	\acro{dft}[DFT]{discrete Fourier transform}
	\acro{sota}[SotA]{state of the art}
\end{acronym}



\allowdisplaybreaks

\usepackage{moreverb}
\immediate\write18{texcount -inc -incbib 
-sum main.tex > /tmp/wordcount.tex}

\begin{document}	
\title{Beamspace MIMO for Satellite Swarms}

\author{
\IEEEauthorblockN{
Maik Röper\IEEEauthorrefmark{1}, Bho Matthiesen\textsuperscript{\IEEEauthorrefmark{1}\IEEEauthorrefmark{2}}, Dirk Wübben\IEEEauthorrefmark{1}, Petar Popovski\textsuperscript{\IEEEauthorrefmark{3},\IEEEauthorrefmark{2}} and Armin Dekorsy\IEEEauthorrefmark{1}}%
\IEEEauthorblockA{\IEEEauthorrefmark{1} Gauss-Olbers Center, c/o University of Bremen, Dept. of Communications Engineering, 28359 Bremen, Germany\\
	\IEEEauthorrefmark{2} University of Bremen, U Bremen Excellence Chair, Dept.\ of Communications Engineering, 28359 Bremen, Germany\\
\IEEEauthorrefmark{3} Aalborg University, Department of Electronic Systems, 9220 Aalborg, Denmark\\
	Email: \{roeper,matthiesen,wuebben,dekorsy\}@ant.uni-bremen.de, petarp@es.aau.dk}
}

\maketitle

\begin{abstract}
	Systems of small distributed satellites in low Earth orbit (LEO) transmitting cooperatively to a multiple antenna ground station (GS) are investigated. These satellite swarms have the benefit of much higher spatial separation in the transmit antennas than traditional big satellites with antenna arrays, promising a massive increase in spectral efficiency. However, this would require instantaneous perfect channel state information (CSI) and strong cooperation between satellites. In practice, orbital velocities around 7.5\,km/s lead to very short channel coherence times on the order of  fractions of the inter-satellite propagation delay, invalidating these assumptions. In this paper, we propose a distributed linear precoding scheme and a GS equalizer relying on local position information. In particular, each satellite only requires information about its own position and that of the GS, while the GS has complete positional information. Due to the deterministic nature of satellite movement this information is easily obtained and no inter-satellite information exchange is required during transmission. Based on the underlying geometrical channel approximation, the optimal inter-satellite distance is obtained analytically. Numerical evaluations show that the proposed scheme is, on average, within 99.8\,\% of the maximum achievable rate for instantaneous CSI and perfect cooperation
\end{abstract}

\begin{IEEEkeywords}
	low Earth orbit (LEO), small-satellite swarms, MIMO satellite communications, distributed precoding, angle division multiple access
\end{IEEEkeywords}

\section{Introduction}
Integrating \acp{ntn} into terrestrial communication systems is an important step towards truly ubiquitous connectivity \cite{3GPPTR22.822,Kodheli.etal.2021}. An essential building block are small satellites in \ac{leo} that are currently deployed in private sector mega constellations \cite{Portillo.Cameron.Crawley.2019,Di.Song.Li.Poor.2019,Leyva-Mayorga2020}. Their main benefits are much lower propagation delays and deployment costs due to the \ac{leo} when compared to more traditional high-throughput satellites \cite{Zheng.Chatzinotas.Ottersten.2012,Joroughi.Vazquez.Perez-Neira.2016,Perez-Neira.Vazquez.Shankar.Malekli-Chatzinotas.2019} in \ac{meo} and \ac{geo}. While current systems focus on connecting \acp{rx} to a single satellite, combining several low cost satellites in swarms leads to increased flexibility and scalability \cite{Verhoeven.Bentum.Monna.Rotteveel.Guo.2011}.

Especially the joint transmission of multiple satellites forming large virtual antenna arrays promises tremendous spectral efficiency gains solely due to the increased spatial separation of antennas \cite{Budianu.Meijernik.Bentum2015,Richter.Bergel.Noam.Yair.2020,Schwarz.Delamotte.Storek.Knoop2019}. However, the straightforward implementation of conventional \ac{mimo} transmission schemes requires complete instantaneous \ac{csi} and inter-satellite coordination of joint beamforming. This is infeasible due to very short channel coherence times resulting from high orbital velocities in combination with comparably large propagation delays, both in ground-to-satellite and in inter-satellite links. In this paper, we show that this is not an obstacle if positional information is exploited. In contrast to complete \ac{csi}, this information is often readily available or easily determined from the deterministic movement of satellites. This leads to an approximate channel model, 
which is employed to derive a beamspace \ac{mimo} \cite{Lin.Gao.Jin.Li.2017,Ahmed2018} based distributed linear precoder and equalizer. The precoder has low complexity, requires, at each satellite, only knowledge of the own rotation and the position of itself and the \ac{rx}, and achieves close to optimal spectral efficiency. Similarly, the equalizer only needs \ac{aoa} information for the satellites and, given proper design of the satellite swarm, shows nearly optimal performance. We obtain an analytical solution on the optimal swarm layout and numerically evaluate the system performance.

The related literature can be summarized as follows:
In \cite{Budianu.Meijernik.Bentum2015}, the \ac{dl} from a satellite swarm with more than 50 nano-satellites towards a single antenna \acf{rx} is studied. It is shown that, if the signals of all satellites add up in phase at the \ac{rx} a high array gain is achieved.
Communication between multiple satellites and a \ac{rx} with multiple antennas is studied in \cite{Yamashita.Kobayashi.Ueba.Umehira.2005}, where an iterative interference cancellation algorithm is considered to deal with the large spatial correlation between two close \ac{geo} satellites.
Furthermore, in \cite{Goto.Shibayama.Yamashita.Yamazato.2018} and \cite{Liolis.Panagopoulos.Cottis.2007}, the capacity of multi-satellite systems are studied.
In \cite{Roeper.Dekorsy.2019}, a distributed precoding algorithm, based on the \ac{mmse} criterion and exploiting information exchange between the satellites, is proposed for a multi-user \ac{dl} scenario.
In \cite{Richter.Bergel.Noam.Yair.2020} a \ac{zf} equalizer at the ground terminal is proposed while receiving from two satellites.
In \cite{You.Li.Wang.Gao.Xia.Ottersten.2020,Guo.Lu.Gao.Xia.2021,Lin.Lin.Champagne.Zhu.AlDhahir.2020}, beamspace \ac{mimo} is adapted to ground to satellite communications, focusing on scenarios involving a single satellite.

\section{System Model and Performance Bounds}\label{sec:system}
Consider a swarm of $\NS$ satellites flying in a trail formation with constant inter-satellite distance $\DS$. They have a common circular orbit at an altitude of $d_0$ that is assumed to be ideal Keplerian and aligned within the $\mathsf{xy}$-plane.
Then, the polar coordinates of satellite $\ell$ in the Earth-centered coordinate system are denoted by $\vekr_\ell = [r_0, \vartheta_\ell]^T$, where $r_0$  is the orbital radius, i.e., the distance from the center of the Earth to the satellite, and $\vartheta_\ell$ is the polar angle. 
Given the Earth's radius $\RE=\SI{6371}{\kilo\meter}$, the orbital radius is $r_0=\RE + d_0$. The \ac{rx} is located within the orbital plane at position $\vekr_\text{Rx} = [\RE, \pi/2]$. This setup is illustrated in Fig.~\ref{fig:angles}.

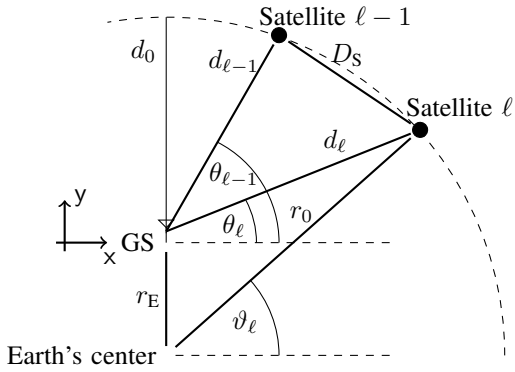
\begin{figure}
	\centering
	\tikzstyle{block} = [draw, fill=white, rectangle, 
    minimum height=3em, minimum width=3.3em]

\newcommand{\Nsat}{N_{\mathrm{S}}}

\tikzset{sat/.pic ={
	\draw[fill=black] (0,0) circle [radius=0.1];
}}

\tikzset{rxantenna/.pic ={
		\draw (0,0) -- (0,0.2) -- (-0.1,0.3) -- (0.1,0.3) -- (0,0.2);
}}


\definecolor{darkgreen}{rgb}{0.12549019607843137255,0.4980392156862745098,0.16862745098039215686}

\begin{tikzpicture}
 	[scale=1.5,node/.style={circle,draw,fill=white,circular drop shadow,thick,inner sep=0pt,minimum size=10mm,fill=white},
 	arrow/.style={->,shorten <=1pt,shorten >=1pt,thick},
 	branch/.style={circle,fill, minimum size=6pt,inner sep=0pt},
 	arrow_d/.style={->,shorten <=1pt,shorten >=1pt,>=stealth',semithick,dashed}]

\draw[arrow] (-2.5,0) --  node [below, at end] {$\mathsf{x}$} (-2,0);
\draw[arrow] (-2.4,-0.1) --  node [right, at end] {$\mathsf{y}$} (-2.4,0.4);

\node (sat1) at (-0.5,1.84) {};
\node (satNs) at (0.75,1) {};
\node (rx) at (-1.5,0) {};
\node (center) at (-1.5,-1) {};

\node () at (0,2)  {Satellite $\ell-1$};
\node () at (1.1,1.2)  {Satellite $\ell$};
\node () at (-1.75,0)  {GS};
\node () at (-2.25,-1)  {Earth's center};

\draw (rx) -- (-1.5,2);
\node (d0) at (-1.7,1.7) {$d_0$};

\pic[rotate=-10] (Sat1) at (sat1) {sat};
\draw[thick] (sat1) -- (-1.5,0.1);
\node (d1) at (-0.9,1.6) {$d_{\ell-1}$};	
%
%
\pic[rotate=-80] (SatNs) at (satNs) {sat};
\draw[thick] (satNs) -- (-1.5,0.1);
\node (dNs) at (0,0.9) {$d_{\ell}$};

\draw[thick] (satNs) -- (sat1);
\node () at (0.1,1.66) {$D_\text{S}$};

\pic (rNr) at (rx) {rxantenna};
	
\draw[dashed] (rx) -- (0.5,0);
\draw (rx) ++ (0:0.8) arc (0:28:0.8);
\node () at (-0.9,0.15) {$\theta_\ell$};
\draw (-1.5,0) ++ (0:1) arc (0:63:1);
\node () at (-0.9,0.6) {$\theta_{\ell-1}$};

\draw[dashed] (center) ++ (0:3) arc (0:100:3);

\draw[thick] (center) -- (satNs);
\node () at (-0.3,0.25) {$r_0$};

\draw[thick] (center) -- (rx);
\node () at (-1.66,-0.5) {$r_\text{E}$};

\draw[dashed] (center) -- (0.5,-1);
\draw (center) ++ (0:1) arc (0:42:1);
\node () at (-0.8,-0.7) {$\vartheta_\ell$};

%

	

\end{tikzpicture}%
	\caption{Geometric relation between Satellites and \ac{rx}}
	\label{fig:angles}%
\end{figure}

It is convenient to describe the satellites position in a $\ac{rx}$ centered coordinate system. The relative polar coordinates of satellite $\ell$ to the \ac{rx} are denoted as $\vekd_\ell = [d_\ell, \aoal]$, where $\aoal\in[0,\pi]$ is the elevation angle, i.e., the polar angle in the \ac{rx} centered coordinate frame. This angle is equivalent to the \ac{aoa} of the signal from satellite $\ell$ at the \ac{rx}. Considering the triangle between satellite $\ell$, the \ac{rx} and the Earth's center, we obtain from the law of sines $\vartheta_\ell = \aoal + \arcsin(\RE \cos(\aoal) / r_0 )$. Correspondingly, the distance $d_\ell$ between satellite $\ell$ and the \ac{rx} is $d_\ell = \sqrt{d_0^2 + 2\RE r_0\left( 1 -  \sin\left(\vartheta_\ell\right)\right)}$.
The \ac{aod} $\aodl$ from satellite $\ell$ is readily obtained from the elevation angle $\aoal$ and the satellite's rotation $\eta_\ell$ as
$\aodl = \aoal - \eta_\ell - \frac{\pi}{2}$, where $\eta_\ell$ is defined such that the antenna arrays at the satellite and the \ac{rx} are parallel to each other at $\eta_\ell = 0$. Hence, if the satellite points perfectly towards the \ac{rx}, the \ac{aod} is $\aodl=0$. 
Assume that the satellite can transmit only in the directions $\aodl\in[-\pi/2,\pi/2]$. Then, its rotation must be within the interval $\eta_\ell\in[\min(-\pi/2,\aoal - \pi), \min(\pi/2, \aoal)]$.

Observe that $r_0$ and $\RE$ are constant over time, while $d_\ell$, $\vartheta_\ell$, $\aoal$, $\eta_\ell$, and $\aodl$ are time variant. All of these values are either known a~priori at the satellites and the \ac{rx} or can be obtained easily because the satellites are moving on predefined orbits.

\subsection{Communication Model} \label{sec:system_trans}
Satellite $\ell$ is equipped with an \ac{ula} consisting of $\Nt$ antennas spaced $\DA=\frac{c_0}{2 \fc}$ apart, where $\fc$ is the carrier frequency and $c_0$ is the speed of light. The satellites jointly transmit $M$ independent messages that are known a~priori at all satellites and encoded in $\veks\in\C^M$ using uncorrelated unit variance Gaussian codebooks. Satellite $\ell$ employs linear precoding to transmit the signal $\vekx_\ell = \PC_\ell\veks$ with $\PC_\ell\in\C^{\Nt\times M}$. The transmission is subject to an average power constraint $\rho_\ell$, i.e.,
\begin{align}\label{eq:power_constr}
	\tr{\PC_\ell\PC_\ell^H} \leq \rho_\ell\,.
\end{align}

The \ac{rx} is equipped with an \ac{ula} consisting of $\Nr\ge \NS$ antenna elements.
Its received signal is
$\veky = \sum_{\ell=1}^{\NS} \vekHsat\vekx_{\ell} + \vekn$,
where $\vekn$ is independent and identically distributed (i.i.d.) complex circularly symmetric white Gaussian noise with power $\sigma_\mathsf{n}^2$ and $\vekHsat\in\C^{\Nr\times\Nt}$ is the channel from satellite $\ell$ to the \ac{rx}. Due to the collaborative transmission, this is effectively a point-to-point channel. In particular, let $\vekH=\left[\vekH_1, ... ,\vekH_{\NS} \right]$ and $\vekx=\left[\vekx_1^T, ... ,\vekx_{\NS}^T \right]^T$ to obtain the equivalent channel
$$\veky = \vekH \vekx + \vekn.$$
The capacity of this channel is
\begin{align}\label{eq:rate_p2p}
	R_{\text{opt}} 
	&= \max_{\tr{\PC\PC^H} \leq \sum_{\ell=1}^{\NS} \rho_\ell} \log_2\left\vert \vekI_{\Nr} + \sigma_\mathsf{n}^{-2} \vekH\PC\PC^H\vekH^H \right\vert,
\end{align}
where $\PC=[\PC_1^T,...,\PC_{\NS}^T]^T\in\C^{\NTx\times M}$ with $\NTx = \NS \Nt$ \cite{Telatar.1999}. The maximum in \eqref{eq:rate_p2p} is achieved for
\begin{align}\label{eq:pc_opt}
	\PC_\text{opt} = \vekV\vekP^{\frac{1}{2}},
\end{align}
where $\vekU\vekSig\vekV^H$ is the \ac{svd} of $\vekH$, and $\vekP=\diag{p_1,...,p_{\NTx}}$ with $p_\mu$ the transmit power of the $\mu$th beam. Combining \eqref{eq:rate_p2p} and \eqref{eq:pc_opt}, we obtain 
\begin{align}\label{eq:rate_opt}
	R_{\text{opt}} = \sum_{\mu=1}^{M} \log_2\left(1 + \lambda_\mu\frac{p_\mu}{\nvar} \right),
\end{align}
where $\lambda_\mu$ the $\mu$th Eigenvalue of $\vekH\vekH^H$.
The optimal power allocation $\vekP$ is obtained from the waterfilling algorithm \cite{Telatar.1999}.

\section{Geometry Based DL Transmission}\label{sec:pc}
Observe that several implicit assumptions are made in the derivation of \eqref{eq:rate_opt}. In particular, perfect instantaneous \ac{csi} is required at all involved communication nodes and the beamforming matrices are computed centralized. Moreover, the per-satellite power constraints \eqref{eq:power_constr} are not necessarily met as a relaxed sum power constraint over all satellites is considered in \eqref{eq:rate_p2p}.
Obtaining the necessary \ac{csi} requires accurate channel estimation at the satellites, which then has to be shared with all other satellites in the swarm via inter-satellite links. Consequently, especially in the \ac{leo}, where the coherence time of the channel is very short, \ac{svd} based optimal precoding is not feasible.

However, the channels from the antenna elements of a single satellite to the \ac{rx} are highly correlated \cite{Schwarz.Delamotte.Storek.Knoop2019} and thus, although the matrix $\vekV\in\C^{\NTx\times\NTx}$ has $\NTx$ columns, there are only $M\le \NS\le \NTx$ singular values significantly larger than zero. 
Accordingly, only the right singular vectors corresponding to the $M$ largest singular values are of interest for designing the precoding matrix $\PC$.
Furthermore, the communication between satellites and \acp{rx} is usually done under \ac{los}. Thus, the channel matrix $\vekH$ is fully determined by the distances $d^\ell_{m,n}$ between transmit and receive antennas as well as atmospheric effects \cite{3GPP.TR.38.811,Storek.Hofmann.Knopp.2015}.

In this section, we exploit readily available position information to estimate the dominant large-scale components of $\vekH$. Based on this geometrical channel model, we design a distributed linear precoder that does not require any inter-satellite coordination and a linear equalization scheme at the \ac{rx} that does not rely on traditional \ac{csi} acquisition.

\subsection{Geometrical Channel Approximation} \label{sec:cm}
Due to the large distance between satellite $\ell$ and the \ac{rx} in relation to the antenna spacing, the \ac{aoa} and \ac{aod} between antennas in the transmit and receive arrays are approximately equal.
Thus, the distance $d^\ell_{m,n}$ from the $\ell$th satellite's $n$th antenna to the $m$th \ac{rx} antenna is approximately
\begin{align}
		d_{m,n}^\ell \approx d_\ell &- \DA(m-1)\cos(\aoal)
		- \DA(n-1)\sin(\aodl)    
\end{align}
where $d_\ell$ is the distance from the first transmit antenna at satellite $\ell$ to the first receive antenna. Moreover, the $m n$ channels from satellite $\ell$ to the ground station are subject to the same atmospheric effects \cite{Storek.Hofmann.Knopp.2015}. Thus, it is reasonable to assume that the entries in $\vekH_\ell$ have equal magnitude and differ only in their phase. In particular, let $\nu = 2\pi \fc/c_0$ be the wavenumber of the radiated carrier signal. Then, the phase difference between channels from adjacent transmit antennas to the same receive antenna is $\nu\DA\sin\left(\aodl\right) = \pi \sin\left(\aodl\right)$. Likewise, the phase difference between channels from a single transmit antenna to adjacent receive antennas is $\pi\cos\left(\aoal\right)$ \cite{You.Li.Wang.Gao.Xia.Ottersten.2020,Lin.Gao.Jin.Li.2017}. Thus, the $(m,n)$th entry of $\vekH_\ell$ is approximately $\alpha_\ell e^{j\pi\left((m-1)\cos(\aoal) +  (n-1)\sin(\aodl)\right)}$,
where $\alpha_\ell$ is the i.\,i.\,d.\ complex channel gain from satellite $\ell$ to the \ac{rx} with $\Exvl{\alpha_\ell}=0$ and $\var{\alpha_\ell}=\sigma_{\alpha}^2$.
As the satellites are following the same trajectory, the statistics of $\{\alpha_\ell\}_{\ell=1}^{\NS}$ are assumed to be the same for all satellites.

Define the steering vectors
\begin{equation}
	\veka_\ell^T = \left[ e^{j\pi m \cos(\aoal)} \right]_{m = 0}^{\Nr - 1},
	\quad
	\vekb_\ell^T = \left[ e^{-j\pi n \sin(\aodl)} \right]_{n = 0}^{\Nt - 1}.
\end{equation}
Then, the approximated channel matrix is 
\begin{align}\label{eq:ch_appr_sat}
	\chest_\ell = \alpha_\ell\veka_\ell\vekb_\ell^H \approx \chtx{\ell}
\end{align}
and has rank one. Due to $\Nr\ge\NS$ and the satellites having distinct positions in the orbital plane, i.e., $\aoa_i\neq\aoa_\ell$, for all $i\neq\ell$, the overall channel matrix $\chest=[\alpha_1\veka_1\vekb_1^H,...,\alpha_{\NS}\veka_{\NS}\vekb_{\NS}^H]$ has rank $\NS$. This allows for the parallel transmission of $M = \NS$ independent streams.

In the following, the precoder and equalizer are designed based on $\chest$. This only requires knowledge of the differential phases between the antennas that is straightforward to obtain from local position information, as shown above and in Section~\ref{sec:system}.

\subsection{Precoding}\label{sec:ad_pc}
Based on the observation, that we can transmit $M=\NS$ independent data streams in parallel, we propose the following  geometry based precoder
\begin{align}\label{eq:pc_ad}
	\PC_\text{geo} 
	&= \sqrt{\frac{1}{\Nt}} \blkdiag{\sqrt{\rho_1} \vekb_1, \dots, \sqrt{\rho_{\NS}} \vekb_{\NS}} \,.
\end{align}
Thus, satellite $\ell$ transmits into the direction of the eigenvector of $\Exvl{\chest^H\chest}$.
Let $\pc_{\ell,\text{geo}}=\sqrt{\rho_\ell/\Nt}\vekb_\ell$, then, due to the block diagonal precoding matrix, satellite $\ell$ transmits $\vekx_\ell =\pc_{\ell,\text{geo}} s_\ell$, i.e., it needs not know $\veks$ but only their part $s_\ell$ of the stream.
Note that according to \eqref{eq:pc_ad}, the per satellite average power constraint \eqref{eq:power_constr} is always satisfied.
Furthermore, satellite $\ell$ only has to know its \ac{aod} $\aodl$ and no cooperation between the satellites is needed to determine the precoding matrix $\PC_\text{geo}$.
In addition, the proposed precoding is based on manipulating only the phase at each antenna and thus, an efficient implementation with a single RF chain per satellite is possible \cite{Lin.Gao.Jin.Li.2017}.

\subsection{Linear Equalization}
In a satellite swarm, all satellites are usually of the same type \cite{Verhoeven.Bentum.Monna.Rotteveel.Guo.2011} and thus, it is assumed that all satellites transmit with the same power, in the following, i.e., $\rho_\ell=\rho$, for all $\ell$.
Assuming the previously proposed precoder \eqref{eq:pc_ad} and
employing a linear equalizer $\EQ=[\eq_1,...,\eq_{\NS}]^H\in\C^{\NS\times\Nr}$ at the \ac{rx}, the estimated signal is $\veksest=\EQ\vekH\PC_\text{geo}\veks + \EQ\vekn$.
Consequently, the signal transmitted by satellite $i$ interferes with the signal transmitted by satellite $\ell$. Then, the \ac{sinr} of the $\ell$th stream is
\begin{align}
	\Gamma_\ell &= \frac{\left\vert\eqsat^H\chtx{\ell}\pc_{\ell,\text{geo}}\right\vert^2} {\sum_{i\neq\ell} \left\vert \eqsat^H\chtx{i}\pc_{i,\text{geo}} \right\vert^2 + \nvar\eqsat^H\eqsat} \label{eq:sinr}\\
	&= \frac{\eqsat^H\chtx{\ell}\pc_{\ell,\text{geo}}\pc_{\ell,\text{geo}}^H\chtx{\ell}^H\eqsat} {\eqsat^H\left( \sum_{i\neq\ell} \chtx{i}\pc_{i,\text{geo}}\pc_{i,\text{geo}}^H\chtx{i}^H  + \nvar\vekI_{\Nr}\right)\eqsat^H} \label{eq:sinr_rayleigh_quotient}
\end{align}
and the achievable rate $R_\text{lin}$ is given by the sum of the individual rates
\begin{align}\label{eq:rate_sum}
	R_{\text{lin}} = \sum_{\ell=1}^{\NS}  \log_2\left(1 + \Gamma_\ell\right) 
\end{align}
Observe that $\Gamma_\ell$ is independent of $\eq_i$ for all $i\neq\ell$. Thus, \eqref{eq:rate_sum} is maximized by optimizing each $\Gamma_\ell$ separately. Since $\Gamma_\ell$ is a generalized Rayleigh quotient \cite{Horn1990}, it's maximizer is
\cite{Sadek.Tarighat.Sayed.2007}
\begin{align}\label{eq:eq_opt}
	\eq_{\ell,\text{opt}}^H  &= \pc_{\ell,\text{geo}}^H\chtx{\ell}^H \left(\sum_{i=1}^{\NS} \chtx{i}\pc_{i,\text{geo}}\pc_{i,\text{geo}}^H\chtx{i}^H + \nvar\vekI_{\Nr} \right)^{-1}.
\end{align}

However, acquiring perfect instantaneous \ac{csi} $\vekH$ is costly. Instead, we obtain the equalizer based on the approximated channel in Section~\ref{sec:cm}. Since $\chest_{\ell}\pc_{\ell,\text{geo}} = \alpha_\ell\sqrt{\Nt\rho}\veka_\ell$, the proposed equalizer is
\begin{subequations}\label{eq:eq_geo}
	\begin{align}
		\eq_{\ell,\text{geo}}^H  &= \veka_\ell^H \left(\sum_{i=1}^{\NS} \veka_i\veka_i^H + \naltvar\vekI_{\Nr} \right)^{-1} \\
		&= \veka_\ell^H \left(\vekA\vekA^H + \naltvar \vekI_{\Nr} \right)^{-1}
	\end{align}
\end{subequations}
where $\naltvar=\nvar/(\sigma_\alpha^2\Nt\rho)$ and $\vekA=[\veka_1,...,\veka_{\NS}]$.
Note that the proposed equalizer only requires the knowledge of the \acp{aoa} $\{\aoal\}_{\ell=1}^{\NS}$ from all satellites as well the \ac{snr} $1/\naltvar$ at the \ac{rx}.

\section{Optimal Inter-Satellite Distance}\label{sec:d_sat}
Based on the channel approximation \eqref{eq:ch_appr_sat}, the interconnection between the inter-satellite distance $\DS$ and the achievable rate is now analyzed. 
Assuming perfect \ac{csi} at the \ac{rx} and the fixed precoder $\PC_\text{geo}$, the ergodic rate $\tilde{R}$ for $\chest$ is upper bounded by
\begin{subequations}\label{eq:rate_approx}
	\begin{align}
		\tilde{R}_\text{opt} &\le \log_2\left\vert \vekI_{\Nr} + \frac{1}{\nvar} \Exvl{ \chest\PC_\text{geo}\PC_\text{geo}^H\chest^H} \right\vert \\
		&= \log_2\left\vert \vekI_{\Nr} + \frac{1}{\naltvar}\vekA\vekA^H \right\vert
	\end{align}
\end{subequations}
Thus, the achievable rate for $\chest$ is determined by the matrix $\vekA$, which is composed of the steering vectors $\{\veka_\ell\}_{\ell=1}^{\NS}$.

Due to the trail formation, the swarm is fully described by two parameters: The inter-satellite distance $\DS$ and the number of satellites $\NS$.
Choosing a proper inter-satellite distance $\DS$ is crucial, as it directly impacts the angular spread of the \acp{aoa} between the satellites,  which can be used to tune the matrix $\vekA$ such that the achievable rate is maximized, as stated in the following proposition.
\begin{proposition}
	The optimal inter-satellite distance w.r.t. the upper bound of the rate \eqref{eq:rate_approx} is achieved, if the following relation for the \ac{aoa} between every two satellites $\ell$ and $i$ holds
	\begin{align}\label{eq:orth_cond}
		\forall \ell\neq i: |\cos(\aoa_\ell) - \cos(\aoa_{i})| = \frac{2k}{\Nr}
	\end{align}
	where $k$ can be any positive integer number which is not a multiple of $\Nr$, i.e., $k$ must fulfill $\mod(k,\Nr)\neq 0$. 
\end{proposition}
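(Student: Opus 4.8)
The plan is to reduce the $\Nr\times\Nr$ determinant appearing in the rate bound \eqref{eq:rate_approx} to an $\NS\times\NS$ Gram matrix and then characterize its maximizer via Hadamard's inequality. First I would apply Sylvester's determinant identity to write
\[
	\det\!\left(\vekI_{\Nr} + \frac{1}{\naltvar}\vekA\vekA^H\right) = \det\!\left(\vekI_{\NS} + \frac{1}{\naltvar}\vekA^H\vekA\right),
\]
so that maximizing the bound in \eqref{eq:rate_approx} over $\DS$ is equivalent to maximizing the determinant of the Hermitian positive-definite matrix $\vekB = \vekI_{\NS} + \frac{1}{\naltvar}\vekA^H\vekA$. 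The crucial structural observation is that the diagonal entries of $\vekA^H\vekA$ are $\|\veka_\ell\|^2 = \Nr$ for every $\ell$, independently of the \acp{aoa}; hence every diagonal entry of $\vekB$ is fixed at $1 + \Nr/\naltvar$ regardless of the swarm geometry, and only the off-diagonal entries of $\vekB$ depend on $\DS$.

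Next I would invoke Hadamard's inequality: for any Hermitian positive-definite matrix $\vekB$, $\det(\vekB) \le \prod_\ell B_{\ell\ell}$, with equality if and only if $\vekB$ is diagonal. Since the diagonal is fixed, this yields the upper bound $\det(\vekB) \le (1 + \Nr/\naltvar)^{\NS}$ and shows that the rate bound is maximized exactly when $\vekA^H\vekA$ is diagonal, i.e., when the steering vectors are mutually orthogonal, $\veka_\ell^H\veka_i = 0$ for all $\ell\neq i$. Note this is feasible only because $\Nr\ge\NS$, so that $\NS$ mutually orthogonal vectors fit in $\C^{\Nr}$.

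The final step is to translate orthogonality into the stated angular condition. I would evaluate the inner product as a geometric (Dirichlet-kernel) sum,
\[
	\veka_\ell^H\veka_i = \sum_{m=0}^{\Nr-1} e^{j\pi m(\cos(\aoa_i) - \cos(\aoal))},
\]
which vanishes precisely when $e^{j\pi\Nr(\cos(\aoa_i)-\cos(\aoal))}=1$ while the common ratio $e^{j\pi(\cos(\aoa_i)-\cos(\aoal))}\neq 1$. The first condition gives $|\cos(\aoal)-\cos(\aoa_i)| = 2k/\Nr$ for some integer $k$, and the second excludes the degenerate case in which the summand is constant and the sum collapses to $\Nr$ rather than to zero; this exclusion is exactly the requirement $\mod(k,\Nr)\neq 0$. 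Together these reproduce \eqref{eq:orth_cond}.

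I expect the main subtlety to be precisely this equality/exclusion bookkeeping: one must verify both that the geometric sum genuinely evaluates to zero (rather than to $\Nr$, which would happen when $k$ is a multiple of $\Nr$) and that Hadamard's equality condition is attained only in the diagonal case, so that \eqref{eq:orth_cond} is not merely sufficient but characterizes the optimum of the bound. A secondary point worth flagging is that the proposition characterizes the optimizer of the \emph{upper bound} in \eqref{eq:rate_approx}; whether a single scalar $\DS$ can realize \eqref{eq:orth_cond} simultaneously for all pairs $(\ell,i)$ is a separate feasibility question dictated by the geometry of the trail formation, and is not settled by the determinant argument alone.
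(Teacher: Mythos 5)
Your proof is correct, but it takes a genuinely different route from the paper's for the key optimality step. The paper performs the same reduction to the $\NS\times\NS$ Gram form, but then argues via majorization: with the trace of $\vekA^H\vekA$ fixed at $\NS\Nr$, the product $\prod_\ell\left(1+\tilde{\lambda}_\ell/\naltvar\right)$ over the eigenvalues is maximized when all eigenvalues are equal (citing a Schur-concavity result), and it then needs a separate argument — that a normal matrix with a single eigenvalue of full multiplicity must be a scalar multiple of the identity — to conclude that $\lambda\vekI$ is the unique maximizer. You instead exploit the stronger structural fact that \emph{each} diagonal entry of $\vekA^H\vekA$ equals $\Nr$ individually (not merely that their sum is fixed), which lets Hadamard's inequality do all the work in one step: $\det\left(\vekI_{\NS}+\frac{1}{\naltvar}\vekA^H\vekA\right)\le\left(1+\Nr/\naltvar\right)^{\NS}$ with equality if and only if the Gram matrix is diagonal, i.e., the steering vectors are pairwise orthogonal. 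This is arguably cleaner and delivers the necessary-and-sufficient characterization directly, without the detour through eigenvalue multiplicities and normality; the paper's trace-based argument is slightly more general in that it would apply even if only the trace, rather than the full diagonal, were invariant. The final translation of orthogonality into the angular condition via the geometric (Dirichlet-kernel) sum, including the exclusion of $k$ being a multiple of $\Nr$ so that the sum collapses to $0$ rather than $\Nr$, is identical to the paper's. Your closing caveat — that the proposition optimizes the upper bound and that simultaneous feasibility of \eqref{eq:orth_cond} for all pairs with a single scalar $\DS$ is a separate geometric question — is also exactly the point the paper raises immediately after the proof, motivating its relaxed heuristic \eqref{eq:orth_cond_relaxed}.
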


\begin{proof}
	Observe that \eqref{eq:rate_approx} is equivalent to
	\begin{equation}\label{eq:rate_approx_ext}
		\tilde{R} \le \log_2\left\vert \vekI_{\NS} + \frac{1}{\naltvar}\vekA^H \vekA \right\vert
		= \log_2\left(\prod_{\ell=1}^{\NS}\left( 1 + \frac{\tilde{\lambda}_\ell}{\naltvar} \right)\right)
	\end{equation}
	where $\tilde{\lambda}_\ell$ are the positive eigenvalues of $\vekA^H \vekA$. Keeping the trace of $\vekA^H \vekA$ constant, this is maximized if all eigenvalues have the same value \cite[Thm.~2.21]{Jorswieck2007}. In other words, any $N_S\times N_S$ matrix $\vekB = \vekA^H \vekA$ maximizing \eqref{eq:rate_approx_ext} has a single eigenvalue $\lambda$ with multiplicity $N_S$.
	
	Further, observe that $\vekB$ is a normal matrix. By \cite[Thm.~2.5.4]{Horn1990}, $\vekB$ is similar matrix to a diagonal matrix, i.e., there exists a nonsingular matrix $\vekS$ such that $\vekS^{-1} \vek{\Lambda} \vekS = \vekB$ with $\vek{\Lambda}$ diagonal. Since similar matrices have the same eigenvalues \cite[Cor.~1.3.4]{Horn1990}, $\vek{\Lambda}$ must be $\lambda \vekI$. Then, for every nonsingular $\vekS$, we have $\vekB = \vekS^{-1} \lambda \vekI \vekS = \lambda \vekS^{-1} \vekS = \lambda \vekI$. It follows that $\vekB = \lambda\vekI$ is the unique maximizer of \eqref{eq:rate_approx_ext}.

	If the \acp{aoa} $\aoa_i$ and $\aoal$ of satellites $\ell$ and $i$ satisfy \eqref{eq:orth_cond}, the steering vectors $\veka_i$ and $\veka_\ell$ can be represented as different columns of the $\Nr\times\Nr$ \ac{dft} matrix and are thus orthogonal as the \ac{dft} is an orthogonal matrix, i.e.,
	\begin{subequations}\label{eq:orth_steering}
		\begin{align}
			\veka_i^H\veka_\ell &= \sum_{m=0}^{\Nr-1} e^{j\pi m\left( \cos(\aoa_\ell) - \cos(\aoa_{i}) \right)}\\
			&= \sum_{m=0}^{\Nr-1} e^{j2\pi \frac{km}{\Nr}} 
			= 0\,. 
		\end{align}
	\end{subequations}
	Consequently, the matrix $\vekA^H\vekA$ becomes a scaled identity matrix
	\begin{align}
		\vekA^H\vekA = \Nr\vekI_{\NS}
	\end{align}
	which maximizes the achievable rate \eqref{eq:rate_approx}, as all eigenvalues are identical.
\end{proof}

Consider now two neighbouring satellites $\ell$ and $\ell-1$.  
The difference of the cosine terms $\Delta\phi$ is given by 
\begin{subequations}\label{eq:DeltaPhi_gen}
	\begin{align}
		\Delta\phi &= \cos(\theta_{\ell}) - \cos(\theta_{\ell-1}) \\
		&=  \cos(\theta_{\ell}) - 
		\frac{r_0 \cos\left(\vartheta_\ell + \Delta\vartheta\right)}{\sqrt{d_0^2 + 2\RE r_0\left(1 -  \sin\left(\vartheta_\ell + \Delta\vartheta\right)\right)}} \,.
	\end{align}
\end{subequations}
where $\Delta\vartheta = \vartheta_{\ell-1} - \vartheta_\ell = \arccos\left(1 - \DS^2/(2r_0^2)\right)$ is the angular distance between both satellites $\ell-1$ and $\ell$, measured from the Earth's center, which is constant over time and identical for all neighbouring satellites.
In Fig. \ref{fig:DvsTheta}, the dependency between the required inter-satellite distance $D_\text{S,orth}$ and the \ac{aoa} $\aoal$ in degree to fulfill \eqref{eq:orth_cond} is shown for an altitude $d_0=\SI{600}{\kilo\meter}, k=1$ and different numbers of receive antennas $\Nr$.

Obviously, it is not possible to ensure orthogonal channels between all satellites during the whole flight with a constant inter-satellite distance $\DS$, as the \ac{aoa} changes over time. Adjusting the inter-satellite distance during the flight requires additional fuel and increased complexity for flight control and should thus be avoided.

However, as evaluated numerically in the next section, the capacity is not decreasing dramatically for $\Delta\Phi>2/\Nr$. Therefore, as a close to optimal heuristic, the condition \eqref{eq:orth_cond} can be relaxed, such that the average capacity over the whole flight is maximized if
\begin{align}\label{eq:orth_cond_relaxed}
	\min_\ell \Delta\phi = \min_\ell \cos(\aoa_\ell) - \cos(\aoa_{\ell-1}) \geq \frac{2}{\Nr}
\end{align}
holds for each time instance.

\begin{figure}[t]%
	\centering
\begin{tikzpicture}

\definecolor{color0}{rgb}{0.12156862745098,0.466666666666667,0.705882352941177}
\definecolor{color1}{rgb}{1,0.498039215686275,0.0549019607843137}
\definecolor{color2}{rgb}{0.172549019607843,0.627450980392157,0.172549019607843}

\begin{axis}[
width=\columnwidth,
height=2.5in,
legend cell align={left},
legend style={fill opacity=0.8, draw opacity=1, text opacity=1, draw=white!80!black},
log basis y={10},
xlabel={$\theta_\ell\;[^\circ]$},
xmajorgrids,
xmin=1, xmax=90,
ylabel={$D_{\text{S,orth}}\;[\text{km}]$},
ymajorgrids,
ymin=9.25435572039344, ymax=2000,
ymode=log,
]
\addplot [thick, color0, dashed]
table {%
0 2156.71089257438
1 2038.99033252771
2 1926.76839736645
3 1819.54521210101
4 1717.82065172098
5 1621.09484123677
6 1529.36778064839
7 1442.63946995583
8 1360.65997166431
9 1283.17934827902
10 1210.44753729477
11 1141.71472622719
12 1077.23085257105
13 1016.74597883157
14 959.760230019168
15 906.523543628636
16 856.536044670389
17 809.797733144429
18 765.808734061172
19 724.81898491541
20 686.32861071756
21 650.587548962414
22 616.845987165597
23 585.353862821902
24 555.861238436536
25 528.368114009501
26 502.624552046004
27 478.380615051254
28 455.636303025252
29 434.641553462789
30 414.64655387949
31 396.151179264939
32 378.655554629552
33 362.409617468122
34 346.913492791066
35 332.667055587966
36 318.920493374448
37 306.423618634886
38 294.426618884907
39 283.179431619302
40 272.68205683807
41 262.684557046421
42 253.436869739145
43 244.689057421452
44 236.441120093341
45 228.693057754813
46 221.194932911076
47 214.446620551713
48 207.948245687141
49 201.69980831736
50 195.951245937161
51 190.452621051754
52 185.45387115593
53 180.455121260105
54 175.956246353863
55 171.707308942412
56 167.708309025752
57 163.709309109092
58 160.210184182015
59 156.960996749729
60 153.711809317443
61 150.712559379948
62 147.963246937245
63 145.213934494541
64 142.714559546629
65 140.465122093508
66 138.215684640387
67 136.216184682057
68 134.466622218518
69 132.71705975498
70 130.967497291441
71 129.467872322694
72 128.218184848737
73 126.968497374781
74 125.718809900825
75 124.71905992166
76 123.969247437286
77 123.219434952913
78 122.469622468539
79 121.719809984165
80 121.469872489374
81 120.969997499792
82 120.720060005
83 120.470122510209
84 120.470122510209
85 120.470122510209
86 120.470122510209
87 120.720060005
88 121.219934994583
89 121.469872489374
90 122.219684973748
};
\addlegendentry{$N_\text{r}=10$}

\addplot [thick, color1, dashdotted]
table {%
0 1641.58971580965
1 1524.11909325777
2 1413.64672056005
3 1309.67272272689
4 1212.4470372531
5 1121.71972664389
6 1037.24085340445
7 958.760480040003
8 886.278606550546
9 819.045420451704
10 757.060921743479
11 700.075172931078
12 647.588299024919
13 599.350362530211
14 555.111425952163
15 514.371614301192
16 477.130927577298
17 442.889490790899
18 411.647303941995
19 382.904492041003
20 356.661055087924
21 332.417118093174
22 310.422618551546
23 290.177681473456
24 271.432369364114
25 254.43661971831
26 238.690557546462
27 224.194182848571
28 210.697558129844
29 198.450620885074
30 187.203433619468
31 176.706058838237
32 167.20843403617
33 158.210684223685
34 149.962746895575
35 142.214684557046
36 135.216434702892
37 128.71805983832
38 122.469622468539
39 116.721060088341
40 111.472372697725
41 106.723560296691
42 101.974747895658
43 97.725810484207
44 93.7268105675473
45 90.22768564047
46 86.7285607133928
47 83.4793732811068
48 80.2301858488207
49 77.4808734061172
50 74.7315609634136
51 72.4821235102925
52 69.9827485623802
53 67.7333111092591
54 65.7338111509292
55 63.7343111925994
56 61.9847487290608
57 60.2351862655221
58 58.7355612967747
59 57.2359363280273
60 55.7363113592799
61 54.4866238853238
62 53.2369364113676
63 51.9872489374115
64 50.9874989582465
65 49.9877489790816
66 48.9879989999167
67 48.238186515543
68 47.238436536378
69 46.4886240520043
70 45.7388115676306
71 44.9889990832569
72 44.4891240936745
73 43.989249104092
74 43.4893741145095
75 42.9894991249271
76 42.4896241353446
77 41.9897491457621
78 41.7398116509709
79 41.2399366613884
80 40.9899991665972
81 40.740061671806
82 40.4901241770147
83 40.2401866822235
84 40.2401866822235
85 39.9902491874323
86 39.9902491874323
87 39.9902491874323
88 39.9902491874323
89 39.9902491874323
90 39.9902491874323
};
\addlegendentry{$N_\text{r}=30$}

\addplot [thick, color2]
table {%
0 1095.7262271856
1 980.005167097258
2 875.03141928494
3 780.555046253854
4 696.076173014418
5 620.844987082257
6 554.361613467789
7 495.376364697058
8 443.639303275273
9 397.650804233686
10 357.160930077506
11 321.669805817151
12 290.177681473456
13 262.184682056838
14 237.690807567297
15 215.696308025669
16 196.451120926744
17 178.955496291358
18 163.709309109092
19 149.962746895575
20 137.465872156013
21 126.468622385199
22 116.721060088341
23 107.723310275856
24 99.7253104425369
25 92.4771230935911
26 85.7288107342278
27 79.9802483540295
28 74.4816234686224
29 69.4828735727977
30 65.2339361613468
31 60.9849987498958
32 57.2359363280273
33 53.9867488957413
34 50.7375614634553
35 47.9882490207517
36 45.2389365780482
37 42.7395616301358
38 40.4901241770147
39 38.4906242186849
40 36.7410617551463
41 34.9914992916076
42 33.241936828069
43 31.7423118593216
44 30.2426868905742
45 28.9929994166181
46 27.7433119426619
47 26.743561963497
48 25.743811984332
49 24.7440620051671
50 23.7443120260022
51 22.9944995416285
52 21.9947495624635
53 21.2449370780898
54 20.7450620885074
55 19.9952496041337
56 19.4953746145512
57 18.7455621301775
58 18.245687140595
59 17.7458121510126
60 17.2459371614301
61 16.7460621718477
62 16.4961246770564
63 15.996249687474
64 15.7463121926827
65 15.2464372031003
66 14.996499708309
67 14.7465622135178
68 14.4966247187266
69 14.2466872239353
70 13.9967497291441
71 13.7468122343529
72 13.4968747395616
73 13.2469372447704
74 13.2469372447704
75 12.9969997499792
76 12.7470622551879
77 12.7470622551879
78 12.4971247603967
79 12.4971247603967
80 12.2471872656055
81 12.2471872656055
82 12.2471872656055
83 11.9972497708142
84 11.9972497708142
85 11.9972497708142
86 11.9972497708142
87 11.9972497708142
88 11.9972497708142
89 11.9972497708142
90 11.9972497708142
};
\addlegendentry{$N_\text{r}=100$}
\end{axis}

\end{tikzpicture}
	\caption{Required inter-satellite distances $D_\text{S,orth}$ for different elevation angles $\theta_\ell$ and number of receive antennas $\Nr$ at altitude $d_0=600\,$km to achieve orthogonal steering vectors}
	\label{fig:DvsTheta}%
\end{figure}
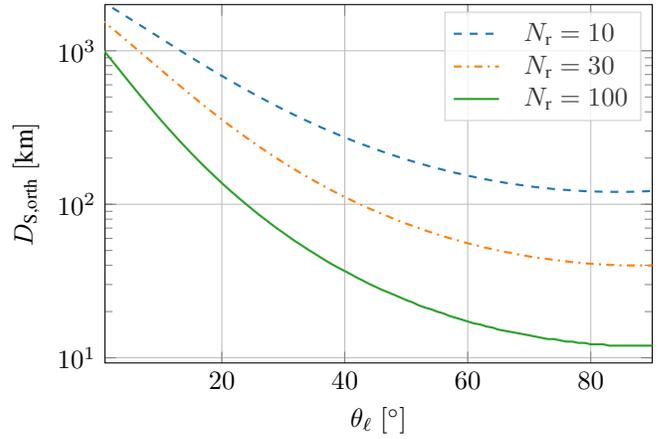

\section{Numerical Evaluations}\label{sec:simulation}
\subsection{Channel Model}\label{sec:channel}
In this paper, a pure \ac{los} based channel model between the \ac{rx} and each satellite is assumed and the carrier frequency is $\fc = \SI{20}{\giga\hertz}$.
The $(m,n)$th element of the channel matrix $\vekH_\ell$ is modeled by
\begin{align}\label{eq:channel_true}
	[\vekH_\ell]_{m,n}=h_{m,n}^\ell &=\frac{1}{\sqrt{L_{m,n}^\ell}}e^{-j\left(\nu d_{m,n}^\ell + \phi_{\text{atm},\ell}\right)}
\end{align}
where $\phi_{\text{atm},\ell}\in[0,2\pi]$ is a uniformly distributed phase shift caused by the atmosphere, and $L$ is the path loss, which is given in decibel as \cite{3GPP.TR.38.811}
\begin{align}
	\begin{split}
		L_{m,n|\dB}^\ell  =&\, 20\log_{10}\left(2\nu d_{m,n}\right) -\left(\GTx + \GRx \right)\\ 
		&+ L_{\text{sf},\ell} + L_{\text{cl},\ell} + L_{\text{gas},\ell} + L_{\text{ts},\ell}   
	\end{split}
\end{align}
where $L_{\text{sf},\ell}\sim\mathcal{N}(0,\sigma_{\text{sf},\ell}^2)$ and  $L_{\text{cl},\ell}$ are the shadow fading and clutter loss, repsectively. In LOS $L_{\text{cl},\ell}=\SI{0}{\dB}$ and $\sigma_{\text{sf},\ell}^2$ depends on the \ac{aoa}. The specific values can be found in \cite{3GPP.TR.38.811}, while in this paper the rural scenario has been considered.
$L_{\text{gas},\ell}$ includes atmospheric gas absorption described in \cite{ITUR.P.676} using the reference standard atmosphere \cite{ITUR.P.835}.
Eventually, $L_{\text{ts},\ell}$ includes the losses due tropospheric scintillation summarized in \cite{ITUR.P.618,ITUR.P.531} 
and $\GTx$ and $\GRx$ are the transmit and receive antenna gains, respectively.
The proposed precoding approach for satellite swarms is evaluated numerically in terms of the achievable rate. 
The sum transmit power $\Ptx$ of the satellite swarm and the total number of transmit antennas $\NTx=60$ is independent of the number of satellites $\NS$ inside the swarm, i.e., the power and transmit antennas of each satellite is $\rho=\Ptx/\NS$ and $\Nt=60/\NS$, respectively.
The \ac{rx} is assumed to be equipped with $\Nr=100$ receive antennas and the minimum elevation angle is  $\aoa_{\min}=30^\circ$. For better comparison, the transmission is assumed to start if the mean \ac{aoa} of all satellites $\aoa_{\text{mean}}=1/\NS\sum_{\ell=1}^{\NS}\aoal$ equals the minimum elevation angle $\aoa_{\min}$, i,e, the evaluation is done for $30^\circ\leq\aoa_{\text{mean}}\leq 150^\circ$.
Furthermore, the noise power is assumed as $P_{\text{N,dB}}=\SI{-120}{\dBW}$ and the the transmit and receive antenna gains are $G_{\text{Tx,dB}}=\SI{17.8}{\dBi}$ and $G_{\text{Rx,dB}}=\SI{20}{\dBi}$, respectively.
The altitude of the orbital plane is $d_0=\SI{600}{\kilo\meter}$ during all simulations.

\subsection{Inter-satellite distance}
In section \ref{sec:d_sat}, the optimal inter-satellite distance $D_{\text{S,orth}}$ based on the channel approximation \eqref{eq:ch_appr_sat} has been derived.
In Fig. \ref{fig:RvsDs_th=90}, the achievable rate \eqref{eq:rate_p2p} is shown in dependence of the inter-satellite distance $\DS$ for a total transmit power $\Ptx=\SI{10}{\dBW}$.
It can be observed that the rate increases with an increasing inter-satellite distance up to $\DS = \SI{12}{\kilo\meter}$. Afterwards, it periodically increases and decreases, slightly, independent on the number of satellites $\NS$. The local maxima are every $12\,$km, which corresponds to section \ref{sec:d_sat}.
In Fig. \ref{fig:RvsDs_avrg}, the achievable rate is averaged over the considered time. Then, there are no more periodic variations of the achievable rate, but the achievable rate does not further increase for $\DS\ge \SI{65}{\kilo\meter}$, which justifies the heuristic \eqref{eq:orth_cond_relaxed}.

\begin{figure}[t]%
	\centering
%
%
\definecolor{mycolor1}{rgb}{0.85000,0.32500,0.09800}%
\definecolor{darkgreen}{rgb}{0.12549019607843137255,0.4980392156862745098,0.16862745098039215686}
\begin{tikzpicture}

\begin{axis}[%
width=\columnwidth,
height=2.5in,
xmin=0,
xmax=40,
xlabel style={font=\color{white!15!black}},
xlabel={$\text{inter-satellite  distance D}_\text{S}\text{ [km]}$},
ymin=12,
ymax=26,
ylabel style={font=\color{white!15!black}},
ylabel={Achievable rate [bps/Hz]},
axis background/.style={fill=white},
xmajorgrids,
ymajorgrids,
legend style={fill opacity=0.8, draw opacity=1, text opacity=1, draw=white!80!black}
]

\addplot [color=orange, thick]
  table[row sep=crcr]{%
0.5	10.1255324155257\\
1	11.7479876496492\\
1.5	12.930857183591\\
2	13.9985596926964\\
2.5	15.0669207194055\\
3	16.1068174178278\\
3.5	17.0571876736074\\
4	17.9377894788624\\
4.5	18.7575835033297\\
5	19.5778925126823\\
5.5	20.3858946008194\\
6	21.1635319847964\\
6.5	21.9325699662937\\
7	22.6295239675046\\
7.5	23.260567524234\\
8	23.8054910329948\\
8.5	24.2797920056965\\
9	24.6662445659414\\
9.5	24.974368044301\\
10	25.220330643267\\
10.5	25.3906816091009\\
11	25.4997294111387\\
11.5	25.5643997176315\\
12	25.5765989548631\\
12.5	25.5537638141506\\
13	25.5408887150232\\
13.5	25.4957903948296\\
14	25.4582261454355\\
14.5	25.4300484039219\\
15	25.4049790631133\\
15.5	25.3894891941869\\
16	25.3676370093473\\
16.5	25.3667913603546\\
17	25.3530044712056\\
17.5	25.3699678485674\\
18	25.3689622553336\\
18.5	25.3758984319784\\
19	25.3835743676457\\
19.5	25.3839844347121\\
20	25.4069267007065\\
20.5	25.4329309025424\\
21	25.4508198597149\\
21.5	25.4839193606471\\
22	25.5031217930032\\
22.5	25.5375146081851\\
23	25.5440106291069\\
23.5	25.5638491590187\\
24	25.5628110577967\\
24.5	25.5721756428297\\
25	25.5552141623656\\
25.5	25.5465925806835\\
26	25.5309804913347\\
26.5	25.5159563188274\\
27	25.5146757893001\\
27.5	25.4984601596686\\
28	25.4894147952172\\
28.5	25.4925577634933\\
29	25.4975486440228\\
29.5	25.4933020660661\\
30	25.494465429245\\
30.5	25.486796651859\\
31	25.4981293149472\\
31.5	25.4841951822782\\
32	25.494362216211\\
32.5	25.5141016039127\\
33	25.5119148929773\\
33.5	25.523471796332\\
34	25.5188257778783\\
34.5	25.5392958447767\\
35	25.5492542031451\\
35.5	25.5912401680942\\
36	25.5413933624005\\
36.5	25.5519129513548\\
37	25.5010384486047\\
37.5	25.5475780006073\\
38	25.5685389833594\\
38.5	25.5824399359265\\
39	25.5471414136513\\
39.5	25.5599500441954\\
40	25.468159057958\\
40.5	25.576225687274\\
41	25.4857582500946\\
41.5	25.5153925520714\\
42	25.5041045015712\\
42.5	25.5420953454189\\
43	25.5089127313156\\
43.5	25.499752613179\\
44	25.5039865143964\\
44.5	25.5327180073775\\
45	25.4790068249412\\
45.5	25.5803481534212\\
46	25.4940815451595\\
46.5	25.5369598641992\\
47	25.500274218782\\
47.5	25.5462026742197\\
48	25.5623594923714\\
48.5	25.5345726826059\\
49	25.6223644054117\\
49.5	25.5394391288427\\
50	25.5107705964129\\
};
\addlegendentry{$N_\text{S}=4$}

\addplot [color=cyan, thick]
  table[row sep=crcr]{%
0.5	10.2217622406192\\
1	11.7189944152733\\
1.5	12.8081284470951\\
2	13.6689588375534\\
2.5	14.4733139646876\\
3	15.287146070729\\
3.5	16.0895081211618\\
4	16.8474554982597\\
4.5	17.5665376345699\\
5	18.2190313710134\\
5.5	18.8038532653007\\
6	19.3360944269667\\
6.5	19.7715554464557\\
7	20.1741619484878\\
7.5	20.4959832071532\\
8	20.8000147895885\\
8.5	21.0190188894227\\
9	21.2095771012354\\
9.5	21.3509451126785\\
10	21.4721484600366\\
10.5	21.5463172935298\\
11	21.607304750088\\
11.5	21.6363614038941\\
12	21.6433861738493\\
12.5	21.654517354463\\
13	21.6238279354005\\
13.5	21.5991439832203\\
14	21.5793813923505\\
14.5	21.5534986393553\\
15	21.5318998415956\\
15.5	21.5216619666365\\
16	21.5325084270524\\
16.5	21.5101513569546\\
17	21.510279525671\\
17.5	21.5080815649704\\
18	21.5063694942358\\
18.5	21.5166398027482\\
19	21.5290974535158\\
19.5	21.5349608989058\\
20	21.5643097033102\\
20.5	21.570270812878\\
21	21.5622824273548\\
21.5	21.5917211704786\\
22	21.6126447628762\\
22.5	21.6208193064255\\
23	21.6232394137292\\
23.5	21.6346237494846\\
24	21.649534937578\\
24.5	21.6411074947933\\
25	21.6503524581304\\
25.5	21.6404243436098\\
26	21.6132081851529\\
26.5	21.6201637831017\\
27	21.6256707037923\\
27.5	21.608209576596\\
28	21.6003311682199\\
28.5	21.603610375605\\
29	21.586376855776\\
29.5	21.5815976873282\\
30	21.5969496633806\\
30.5	21.5930810550359\\
31	21.6036540367258\\
31.5	21.6145008996882\\
32	21.5967910036604\\
32.5	21.5999797724608\\
33	21.596871165585\\
33.5	21.6215002959861\\
34	21.628805205119\\
34.5	21.6437031973035\\
35	21.6295735950396\\
35.5	21.6454184615643\\
36	21.6369077762289\\
36.5	21.6253495635218\\
37	21.6311259681052\\
37.5	21.6317105554804\\
38	21.6245701005094\\
38.5	21.6209163437352\\
39	21.6388458071101\\
39.5	21.6206251453307\\
40	21.6163789253118\\
40.5	21.610707302935\\
41	21.6228824960697\\
41.5	21.6105972443229\\
42	21.629591211721\\
42.5	21.6107482518488\\
43	21.6197123311867\\
43.5	21.6156443753618\\
44	21.6165996031217\\
44.5	21.6315275590422\\
45	21.6274912048089\\
45.5	21.6171911032522\\
46	21.6318306924336\\
46.5	21.6295802854971\\
47	21.6264226470153\\
47.5	21.6270678006615\\
48	21.6253328761082\\
48.5	21.6255501417683\\
49	21.6368239996727\\
49.5	21.6265014191687\\
50	21.6207740842157\\
};
\addlegendentry{$N_\text{S}=3$}

\addplot [color=darkgreen, thick]
  table[row sep=crcr]{%
0.5	10.339100635953\\
1	11.6290451340289\\
1.5	12.6172681943916\\
2	13.3445035325958\\
2.5	13.9297424050884\\
3	14.3908591146227\\
3.5	14.7903125897915\\
4	15.1218541896026\\
4.5	15.4002878043888\\
5	15.6392839543662\\
5.5	15.8446325115363\\
6	16.0196838087294\\
6.5	16.1674458834088\\
7	16.2934486828024\\
7.5	16.4028426992989\\
8	16.4928235018268\\
8.5	16.5720413483767\\
9	16.618164118217\\
9.5	16.6733528300013\\
10	16.7124498935479\\
10.5	16.7319685942617\\
11	16.7552816053409\\
11.5	16.7592694896485\\
12	16.7594455559335\\
12.5	16.7536723949177\\
13	16.7513913707899\\
13.5	16.7442359388605\\
14	16.7322832290527\\
14.5	16.7290292803429\\
15	16.7167342455471\\
15.5	16.7017634737712\\
16	16.7009332043698\\
16.5	16.6907235907057\\
17	16.6942741437706\\
17.5	16.6929983426796\\
18	16.6952429141111\\
18.5	16.7026259754324\\
19	16.7060099460915\\
19.5	16.7248114209639\\
20	16.7175502114775\\
20.5	16.7265577191882\\
21	16.7423797159438\\
21.5	16.7437637168174\\
22	16.7508786137676\\
22.5	16.7536950263092\\
23	16.7528535538587\\
23.5	16.7556452696712\\
24	16.7613266044532\\
24.5	16.7572064198726\\
25	16.7628612371623\\
25.5	16.7622488501648\\
26	16.7555048720723\\
26.5	16.7550999444397\\
27	16.7466826002583\\
27.5	16.7472318812829\\
28	16.7443811080702\\
28.5	16.7407649705872\\
29	16.7342368265962\\
29.5	16.740462426488\\
30	16.7315843737969\\
30.5	16.7359400911868\\
31	16.73785365127\\
31.5	16.7451562114257\\
32	16.7458141329027\\
32.5	16.7427191055509\\
33	16.7469921540389\\
33.5	16.754793776888\\
34	16.7564107589979\\
34.5	16.7666383353348\\
35	16.7596392291087\\
35.5	16.7570635174865\\
36	16.7660218073633\\
36.5	16.7593569625862\\
37	16.7659270000331\\
37.5	16.7485397764405\\
38	16.7558719647734\\
38.5	16.7488792609282\\
39	16.758059242809\\
39.5	16.7454554825319\\
40	16.7473334271065\\
40.5	16.7502949217953\\
41	16.7542820106216\\
41.5	16.7405096297983\\
42	16.7462311835673\\
42.5	16.7471463981691\\
43	16.7453812416053\\
43.5	16.7504672566872\\
44	16.7515656647148\\
44.5	16.7457597276296\\
45	16.7499959834071\\
45.5	16.7573915320393\\
46	16.7552215013674\\
46.5	16.7599106355306\\
47	16.7611451541791\\
47.5	16.7603623138549\\
48	16.76203406035\\
48.5	16.7594457770923\\
49	16.7544798309685\\
49.5	16.7617261522019\\
50	16.7528986619922\\
};
\addlegendentry{$N_\text{S}=2$}

\end{axis}

\end{tikzpicture}%
	\caption{Achievable rate performance for different inter-satellite distances $\DS$ and number of satellites $\NS$ for fixed \ac{aoa} $\aoa_{\text{mean}}=90^\circ$}
	\label{fig:RvsDs_th=90}%
\end{figure}
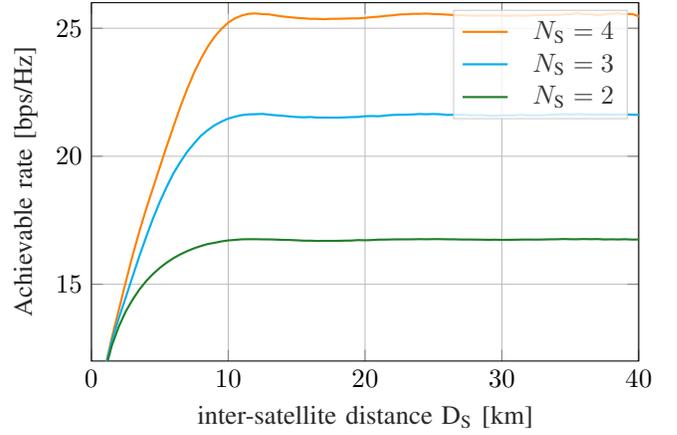

\begin{figure}[t]%
	\centering
%
%
\definecolor{darkgreen}{rgb}{0.12549019607843137255,0.4980392156862745098,0.16862745098039215686}
\begin{tikzpicture}

\begin{axis}[%
width=\columnwidth,
height=2.5in,
xmin=0,
xmax=100,
xlabel style={font=\color{white!15!black}},
xlabel={$\text{inter-satellite  distance D}_\text{S}\text{ [km]}$},
ymin=8,
ymax=28,
ylabel style={font=\color{white!15!black}},
ylabel={Achievable rate [bps/Hz]},
axis background/.style={fill=white},
xmajorgrids,
ymajorgrids,
legend style={fill opacity=0.8, draw opacity=1, text opacity=1, draw=white!80!black}
]
\addplot [color=purple, thick]
  table[row sep=crcr]{%
1	9.1757910721403\\
2	10.8434558061245\\
4	13.5852774111313\\
6	15.9324502346965\\
8	18.0047843913604\\
10	19.8520927732398\\
12	21.2485479402961\\
14	22.2146443399616\\
16	22.9221176050389\\
18	23.5069344378319\\
20	23.9974673897107\\
22	24.4280072691305\\
24	24.7694630804564\\
26	25.0958575564021\\
28	25.3446935610984\\
30	25.5602267487694\\
32	25.7824390772427\\
34	25.9224025731433\\
36	26.0753027198851\\
38	26.2055399098038\\
40	26.3082178730492\\
42	26.3853207848593\\
44	26.4757298415678\\
46	26.5447970293415\\
48	26.6023755069169\\
50	26.6704682009968\\
52	26.6820644267898\\
54	26.7280384807246\\
56	26.7645735881642\\
58	26.7640454672549\\
60	26.7941400655831\\
62	26.8184112709729\\
64	26.8186278045743\\
66	26.8523149182785\\
68	26.8301756353081\\
70	26.8249031234747\\
72	26.8295294237621\\
74	26.8363119916655\\
76	26.8261152054946\\
78	26.8374723899744\\
80	26.8272093853865\\
82	26.8026199240188\\
84	26.8423517085543\\
86	26.8331952166591\\
88	26.8048746441245\\
90	26.8127310376683\\
92	26.8338430610877\\
94	26.8151075186333\\
96	26.7952919286061\\
98	26.796327812666\\
100	26.7739212620196\\
};
\addlegendentry{$N_\text{S}=6$}

\addplot [color=teal, thick]
  table[row sep=crcr]{%
1	9.25397818816901\\
2	10.7487413937109\\
4	13.2410983219253\\
6	15.3541009398416\\
8	17.2334078537882\\
10	18.8481205946544\\
12	20.0408096735948\\
14	20.8503702312513\\
16	21.4815531462882\\
18	21.9719369928884\\
20	22.4093983989621\\
22	22.7829323833283\\
24	23.1152187770341\\
26	23.3723112593191\\
28	23.5709679300419\\
30	23.7786211812336\\
32	23.9457722378539\\
34	24.0878132320883\\
36	24.2476310847852\\
38	24.3388606844136\\
40	24.4648058287679\\
42	24.5317454810698\\
44	24.6133180454451\\
46	24.6665902822859\\
48	24.7121524182496\\
50	24.7577303424236\\
52	24.790922540686\\
54	24.8143323913368\\
56	24.8395322562462\\
58	24.8676079098669\\
60	24.8889062030565\\
62	24.8974002812506\\
64	24.888114979272\\
66	24.9083613522706\\
68	24.8839098729681\\
70	24.890786391699\\
72	24.9037859057941\\
74	24.9010654270077\\
76	24.8820614356924\\
78	24.9257880669984\\
80	24.9045555813914\\
82	24.8952418831203\\
84	24.9073709633728\\
86	24.8853212376605\\
88	24.9017767614686\\
90	24.8777090868902\\
92	24.9003497677209\\
94	24.8819596983581\\
96	24.8794497315834\\
98	24.8996333286357\\
100	24.8727801684399\\
};
\addlegendentry{$N_\text{S}=5$}

\addplot [color=orange, thick]
  table[row sep=crcr]{%
1	9.33082935771423\\
2	10.6605921385212\\
4	12.8884119445641\\
6	14.7005535528788\\
8	16.3310700203207\\
10	17.6009729161619\\
12	18.5338292082408\\
14	19.1920113951015\\
16	19.7087320735269\\
18	20.1009556975162\\
20	20.4566436357638\\
22	20.7501528575466\\
24	21.0171078453308\\
26	21.2330601886621\\
28	21.4130159138006\\
30	21.5641953702301\\
32	21.7026055589029\\
34	21.8224202923337\\
36	21.9362889563251\\
38	22.0268275835884\\
40	22.0824527849929\\
42	22.154440228352\\
44	22.2312334898042\\
46	22.2615446638538\\
48	22.305006806172\\
50	22.358994565254\\
52	22.3623752374677\\
54	22.3832149862159\\
56	22.4032168165001\\
58	22.4135290727763\\
60	22.4127549519952\\
62	22.4293235053284\\
64	22.4536727515465\\
66	22.4326012004344\\
68	22.4326748622245\\
70	22.4564804631717\\
72	22.4338568331074\\
74	22.4286269758764\\
76	22.4036721576734\\
78	22.4517513833528\\
80	22.4233529425608\\
82	22.4269158983825\\
84	22.4380471189684\\
86	22.4350121997379\\
88	22.4128705846402\\
90	22.4464969302113\\
92	22.4278398065121\\
94	22.440599577328\\
96	22.4241073993956\\
98	22.43650972949\\
100	22.4146282676278\\
};
\addlegendentry{$N_\text{S}=4$}

\addplot [color=cyan, thick]
  table[row sep=crcr]{%
1	9.4500035915883\\
3	11.5617733035724\\
5	13.2431049791139\\
7	14.5903196415186\\
9	15.6079265807424\\
11	16.3525536886203\\
13	16.9195114018191\\
15	17.2897998371028\\
17	17.6110278108739\\
19	17.8763495075592\\
21	18.0989311255154\\
23	18.2994266317655\\
25	18.4662460804495\\
27	18.5844233681079\\
29	18.7016066174825\\
31	18.8242124421497\\
33	18.890749850476\\
35	18.9606195550306\\
37	19.0181679759818\\
39	19.0731081975021\\
41	19.1108626956514\\
43	19.1471500720614\\
45	19.1807794158175\\
47	19.2032503928163\\
49	19.233293183489\\
51	19.2331961717542\\
53	19.2499986051969\\
55	19.2696621113669\\
57	19.2700832792217\\
59	19.2767938397313\\
61	19.2781531261634\\
63	19.2912639667852\\
65	19.3039304557692\\
67	19.2921105519952\\
69	19.3083060404941\\
71	19.3007953513656\\
73	19.3033368693381\\
75	19.2726190430399\\
77	19.2876923620712\\
79	19.2960109582022\\
81	19.2767511258473\\
83	19.2833541306525\\
85	19.2890304859413\\
87	19.2842107902383\\
89	19.2926402285412\\
91	19.2866513038923\\
93	19.3040562374582\\
95	19.293446512622\\
97	19.2835319773911\\
99	19.2909867963272\\
101	19.2900053023082\\
};
\addlegendentry{$N_\text{S}=3$}

\addplot [color=darkgreen, thick]
  table[row sep=crcr]{%
1	9.59829055332669\\
2	10.5354555799665\\
3	11.2797518643736\\
4	11.8677821465109\\
5	12.3571961452281\\
6	12.746359195249\\
7	13.0698953746112\\
8	13.3654084774593\\
9	13.5781731476237\\
10	13.7865492241589\\
11	13.9457624346968\\
12	14.0782798704673\\
13	14.1919899697558\\
14	14.3083659566549\\
15	14.3849933607014\\
16	14.4784027673185\\
17	14.5352650692162\\
18	14.5779688477877\\
19	14.6606618705045\\
20	14.7032715874655\\
21	14.7462389900966\\
22	14.7698586066029\\
23	14.8294629462552\\
24	14.8666343963085\\
25	14.8896268709689\\
26	14.9187659846287\\
27	14.9610883067561\\
28	14.9687560878954\\
29	14.9746194649883\\
30	15.004702757618\\
31	15.0184436422138\\
32	15.041589584657\\
33	15.0530914185252\\
34	15.0722703498901\\
35	15.0733224628022\\
36	15.0826595543562\\
37	15.0905693525758\\
38	15.1068325899779\\
39	15.1353647781318\\
40	15.1159533912387\\
41	15.1278482063509\\
42	15.1488957398995\\
43	15.1420413374575\\
44	15.1418273998224\\
45	15.1485386760993\\
46	15.1585325492172\\
47	15.1625020553414\\
48	15.1682945425688\\
49	15.1689827149399\\
50	15.1736899987045\\
51	15.1624808998198\\
52	15.1716110979493\\
53	15.162958453511\\
54	15.1772412309794\\
55	15.1919446308298\\
56	15.1796904596913\\
57	15.1815836580332\\
58	15.1899220198914\\
59	15.1761295795782\\
60	15.183497315917\\
61	15.1824593296486\\
62	15.1900226315118\\
63	15.1938731524763\\
64	15.1886410967875\\
65	15.1843978556323\\
66	15.1760644000527\\
67	15.1864107735152\\
68	15.1884851083161\\
69	15.1618535232411\\
70	15.1967942152448\\
71	15.1844018694884\\
72	15.181997420853\\
73	15.186596692529\\
74	15.1983342199294\\
75	15.1895104247826\\
76	15.1902389957672\\
77	15.1820422982842\\
78	15.1671096226836\\
79	15.1842831436322\\
80	15.1816791847825\\
81	15.1840254937785\\
82	15.1868857886663\\
83	15.1842433760061\\
84	15.185390062297\\
85	15.1865915352414\\
86	15.187956950338\\
87	15.1824975997289\\
88	15.1888438693115\\
89	15.1790067091223\\
90	15.1810881417464\\
91	15.1880944759037\\
92	15.1715018332287\\
93	15.1792309334284\\
94	15.1929528643536\\
95	15.1770788186823\\
96	15.1815428030111\\
97	15.1804298218343\\
98	15.187990134716\\
99	15.1821525652448\\
100	15.1811134373587\\
};
\addlegendentry{$N_\text{S}=2$}

\addplot [color=black, thick]
  table[row sep=crcr]{%
1	9.16559237386733\\
5	9.18730569786663\\
10	9.18486698051549\\
15	9.16873475960074\\
20	9.18920141011328\\
25	9.16819315053118\\
30	9.17485588821525\\
35	9.16875418368924\\
40	9.17730150140786\\
45	9.18372085461609\\
50	9.18037296468834\\
55	9.1733637205549\\
60	9.16715245764252\\
65	9.1733220950467\\
70	9.18139500958191\\
75	9.18347543740666\\
80	9.18667173518371\\
85	9.17855293910397\\
90	9.15973961403272\\
95	9.17726953401648\\
100	9.17066248715768\\
};
\addlegendentry{$N_\text{S}=1$}

\end{axis}

\end{tikzpicture}%
	\caption{Achievable rate performance for different inter-satellite distances $\DS$ and number of satellites $\NS$ averaged over time}
	\label{fig:RvsDs_avrg}%
\end{figure}
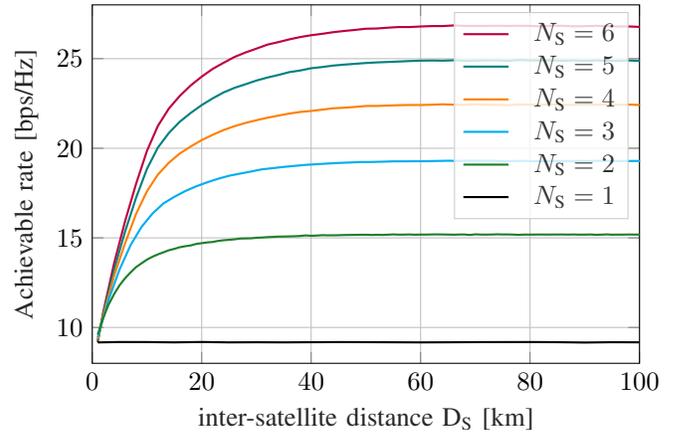

\subsection{Geometry Based DL}
Throughout this paper, two different precoder and two different equalizer approaches have been discussed. In Fig. \ref{fig:RvsSNR_SVD} three possible combinations are compared.

First, the optimal precoder is given by the \ac{svd} \eqref{eq:pc_opt}. Further, assume an ideal, and in general non-linear, equalizer, the achievable rate $R_{\text{opt}}$ is given by \eqref{eq:rate_opt}.
Second, consider the geometry based precoder \eqref{eq:pc_ad} and again, an ideal equalizer. Then, the achievable rate is given by
\begin{align}
	R_\text{per} = \log_2\left\vert \vekI_{\Nr} + \frac{1}{\sigma_\mathsf{n}^2} \vekH\PC_{\text{geo}}\PC_{\text{geo}}^H\vekH^H \right\vert\,.
\end{align}

Finally, consider again the geometric precoder \eqref{eq:pc_ad} and the linear equalizer \eqref{eq:eq_geo}. The corresponding achievable rate is given by $R_{\text{lin}}$ in \eqref{eq:rate_sum}.
In Fig. \ref{fig:RvsSNR_SVD} these three approaches are compared. If the inter-satellite distance is $\DS=\SI{70}{\kilo\meter}$, i.e., $\DS>D_{\text{S,orth}}=\SI{65}{\kilo\meter}$ for $\aoa_{\text{mean}}=30^\circ$, all three approaches achieve almost the same performance. 
If on the other hand $\DS=\SI{10}{\kilo\meter}$, the achievable rate decreases with a linear equalizer. However, with the proposed precoder \eqref{eq:pc_ad}, still, the optimum achivable rate can be reached.

\begin{figure}[t]%
	\centering
%
%
\definecolor{mycolor1}{rgb}{0.85000,0.32500,0.09800}%
\definecolor{mycolor2}{rgb}{0.00000,0.44700,0.74100}%
\begin{tikzpicture}

\begin{axis}[%
width=\columnwidth,
height=2.5in,
xmin=0,
xmax=40,
xlabel style={font=\color{white!15!black}},
xlabel={Total transmit power $\text{P}_{\text{Tx}}\text{ in dBW}$},
ymin=5,
ymax=50,
ylabel style={font=\color{white!15!black}},
ylabel={Achievable rate [bps/Hz]},
axis background/.style={fill=white},
xmajorgrids,
ymajorgrids,
legend style={at={(0.01,0.98)}, anchor=north west, fill opacity=0.8}
]

\addplot [color=purple, dashed, thick, mark=triangle, mark options={solid}]
  table[row sep=crcr]{%
0	9.85380367151781\\
10	19.2847515592654\\
20	29.1911554612231\\
30	39.1509299947631\\
40	49.1161125701864\\
};
\addlegendentry{$R_{\text{opt}}; D_\text{S}=70\,\text{km}$}

\addplot [color=cyan, dashdotted, thick, mark=triangle, mark options={solid}]
  table[row sep=crcr]{%
0	9.85684330437001\\
10	19.2870403969395\\
20	29.1933521231214\\
30	39.1531173622396\\
40	49.1182990077574\\
};
\addlegendentry{$R_{\text{per}}; D_\text{S}=70\,\text{km}$}

\addplot [color=orange, dotted, thick, mark=triangle, mark options={solid}]
  table[row sep=crcr]{%
0	9.83196711981662\\
10	19.2545809048704\\
20	29.1601897452807\\
30	39.1199155607385\\
40	49.0850887202012\\
};
\addlegendentry{$R_{\text{lin}}; D_\text{S}=70\,\text{km}$}

\addplot [color=purple, dashed, thick, mark=triangle, mark options={solid, rotate=180}]
  table[row sep=crcr]{%
0	8.18698310904728\\
10	16.0136077908772\\
20	24.9643852851921\\
30	34.5636810274722\\
40	44.461029803624\\
};
\addlegendentry{$R_{\text{opt}}; D_\text{S}=10\,\text{km}$}

\addplot [color=cyan, dashdotted, thick, mark=triangle, mark options={solid, rotate=180}]
  table[row sep=crcr]{%
0	8.19483494025138\\
10	16.0244324154294\\
20	24.9772287177199\\
30	34.576791553212\\
40	44.4740089020305\\
};
\addlegendentry{$R_{\text{per}}; D_\text{S}=10\,\text{km}$}

\addplot [color=orange, dotted, thick, mark=triangle, mark options={solid, rotate=180}]
  table[row sep=crcr]{%
0	6.53802176056534\\
10	12.3199836529525\\
20	19.5616104982513\\
30	28.7348745087126\\
40	38.6017602226161\\
};
\addlegendentry{$R_{\text{lin}}; D_\text{S}=10\,\text{km}$}

\end{axis}
\end{tikzpicture}%
	\caption{Achievable rate performance for different transmit powers $\Ptx$ and inter-satellite distances $\DS$. Number of satellites is $\NS=3$}
	\label{fig:RvsSNR_SVD}%
\end{figure}
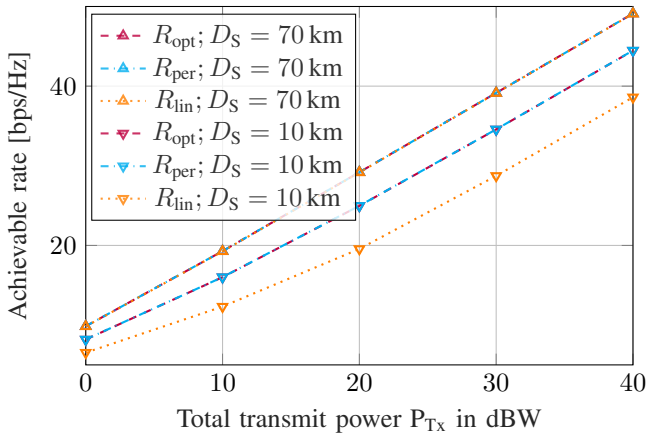

\balance

\section{Discussion}\label{sec:conclusion}
In this paper, we developed a low complexity distributed precoder for satellite swarms and a linear equalizer, both utilizing only the geometric relation between the satellites and \ac{rx} positions.
Given that the inter-satellite distances are chosen adequately, we have shown that the proposed precoder-equalizer combination achieves a performance very close to the capacity upper bound obtained by assuming perfect CSI and instantaneous coordination between satellites. This only requires positional knowledge at all terminals, a small amount of CSI at the \ac{rx}, i.e., tracking of one scalar channel coefficient per satellite, no CSI at the transmitter and no active coordination between satellites.
Of course, in a real world system even these assumptions might not hold. In particular,
the satellite positions are subject to small perturbations and channel coefficients are difficult to track perfectly in this high mobility scenario. However, these aspects can be incorporated in the system design to make it robust against such imperfections, as we will show in the journal extension of this paper.

\section*{Acknowledgment}
This research was supported in part by the German Federal Ministry of Education and Research (BMBF) within the project Open6GHub under grant number 16KISK016A and 
by the German Research Foundation (DFG) under Germany's Excellence Strategy (EXC 2077 at University of Bremen, University Allowance).

\bibliographystyle{./my_lib/IEEEtran}
\bibliography{./my_lib/IEEEabrv,./my_lib/dip_references}

\end{document}